\theoremstyle{definition}
\theoremstyle{definition}
\theoremstyle{definition}
\newtheorem{remark}{Remark}
\theoremstyle{plain}
\newtheorem{theorem}{Theorem}
\theoremstyle{plain}
\newcommand\com[1]{}
\newcommand\Cc{{\let\mathcal\mathscr\mathcal C}}
\newcommand\D{{\mathcal D}}
\newcommand\E{\mathcal{E}}
\newcommand\op[1]{\mathop{\rm #1}\nolimits}
\newcommand\p{\partial}
\newcommand\R{{\mathbb R}}
\newcommand\ric{\mathcal{R}}
\newcommand{\sym}{\mathfrak{g}}
\newcommand{\Sym}{\mathcal{G}}
\def\presuper#1#2%
\begin{document}

 \title{Differential invariants of Kundt waves}
 \author[B. Kruglikov, D. McNutt, E. Schneider]{Boris Kruglikov$^{\dagger\ddagger}$, David McNutt$^\ddagger$, Eivind Schneider$^\dagger$}
 \date{}
\address{\hspace{-17pt}
$^\dagger$Department of Mathematics and Statistics, UiT the Arctic University of Norway, Troms\o\ 90-37, Norway.\newline
E-mails: {\tt boris.kruglikov@uit.no, eivind.schneider@uit.no}. \newline
$^\ddagger$Department of Mathematics and Natural Sciences, University of Stavanger, 40-36 Stavanger, Norway.\newline
E-mail: {\tt david.d.mcnutt@uis.no}. }
 \keywords{Lorentzian metric, scalar curvature invariant, Cartan invariant, differential invariant, invariant derivation, Poincar\'e function}

 \vspace{-14.5pt}
 \begin{abstract}
Kundt waves belong to the class of spacetimes which are not distinguished by their scalar curvature
invariants. We address the equivalence problem for the metrics in this class via scalar differential
invariants with respect to the equivalence pseudo-group of the problem.
We compute and finitely represent the algebra of those on the generic stratum and also specify the
behavior for vacuum Kundt waves. The results are then compared to the invariants computed
by the Cartan-Karlhede algorithm.
 \end{abstract}

 \maketitle

\section*{Introduction}\label{S0}

The Kundt waves can be written in local coordinates as follows
 \begin{equation}\label{KW}
g = dx^2+dy^2-du\,\Bigl(dv-\tfrac{2v}x\,dx+\bigl(8xh-\tfrac{v^2}{4x^2}\bigr)\,du\Bigr),
 \end{equation}
where $h=h(x,y,u)$ is an arbitrary function. In order for $g$ to be vacuum, $h$ must be harmonic in $x,y$.
These metrics were originally defined by Kundt \cite{Ku} in 1961, as a special class of pure radiation spacetimes of
Petrov type III or higher, admitting a non-twisting, non-expanding shear-free
null congruence $\ell$ \cite{ESEFE}:
$g(\ell,\ell)=0$, $\op{Tr}_g(\nabla\ell)=0$, $\|\nabla\ell\|_g^2=0$.

All Weyl curvature invariants \cite{W}, i.e.\ scalars constructed from tensor products of covariant
derivatives of the Riemann curvature tensor by complete contractions, vanish for these spacetimes. Thus,
these plane-fronted metrics belong to the collection of VSI spacetimes, where all polynomial
scalar curvature invariants vanish \cite{PPCM}. These spaces have been extensively explored in the
literature \cite{CHP,CHPP}.

Since it is impossible to distinguish Kundt waves from Minkowski spacetime by Weyl curvature invariants,
other methods have been applied. In \cite{MMC} Cartan invariants have been computed for vacuum
Kundt waves and the maximum iteration steps in Cartan-Karlhede algorithm was determined.
Cartan invariants allow to distinguish all metrics, 
but initially they are functions
on the Cartan bundle, also known as the orthonormal frame bundle, not on the original spacetime.

Cartan invariants are polynomials in structure functions of the canonical frame (Cartan connection)
and their derivatives along the frame \cite{C}. Thus they are obtained from the components of the
Riemann curvature tensor and its covariant derivatives without complete contractions.
Absolute invariants are chosen among those that are
invariant with respect to the structure group of the Cartan bundle.
This is usually achieved by a normalization of the group parameters \cite{C,O}.

When the frame is fixed (the structure group becomes trivial) the Cartan invariants descend
to the base of the Cartan bundle, i.e.\ the spacetime
(in some cases, which we do not consider, the frame cannot be completely fixed but then 
the form of the curvature tensor and its covariant derivatives are unaffected by the frame freedom).
The Cartan-Karlhede algorithm \cite{Kar,ESEFE} specifies when the normalization terminates
and how many derivatives of the curvature along the frame are involved in the final list of invariants.

In this paper we propose another approach, which originates from the works of Sophus Lie. Namely we
distinguish spacetimes by scalar differential invariants of their metrics. The setup is different:
we first determine the equivalence group of the problem that is the group preserving the class
of metrics under consideration. It is indeed infinite-dimensional and local, so it is more proper
to talk of a Lie pseudogroup, or its Lie algebra sheaf. Then we compute invariants of this pseudogroup
and its prolonged action. The invariants live on the base of the Cartan bundle, i.e.\ the spacetime,
but they are allowed to be rational rather than polynomial in jet-variables (derivatives of the
metric components). We recall the setup in Section \ref{S1}.

Recently \cite{KL2} it was established that the whole infinite-dimensional algebra of invariants
can be finitely generated in Lie-Tresse sense. This opens up an algebraic approach to the classification,
and that is what
we implement here. We compute explicitly the generating differential invariants and invariant derivations,
organize their count in Poincar\'e series, and resolve the equivalence problem for generic metrics within the class.
We also specify how this restricts to vacuum Kundt waves. This is done in Sections \ref{S2}-\ref{S3}.
More singular spaces can be treated in a manner analogous to our computations.

Since vacuum Kundt waves have already been investigated via the Cartan method \cite{MMC}, we include
a discussion on the correspondence of the invariants in this case. 
This correspondence does not preserve the order of invariants, because the approaches differ,
and we include a general comparison of the two methods. This is done in Section \ref{S4}.

\section{Setup of the problem: actions and invariants}\label{S1}

Metrics of the form (\ref{KW}) are defined on an open subset of the manifold $M= (\R \setminus \{0\}) \times \R^3\subset\R^4$.
Thus a metric $g$ can be identified as a (local) section of the bundle
$\pi\colon M \times \R\to M$ with the coordinates $x,y,u,v,h$. We denote the total space of the bundle by $E$.
The Kundt waves then satisfy the condition $h_v=0$. This partial differential equation (PDE)
determines a hypersurface $\E_1$ in $J^1 \pi$.

Here $J^k\pi$ denotes the $k$-th order jet bundle. This space is diffeomorphic to $M \times \mathbb R^N$, where $N=\tbinom{k+4}{4}$,
and we will use the standard coordinates $h,h_x,h_y,...,h_{u v^{k-1}},h_{v^{k}}$ on $\mathbb R^N$.
Function $h=h(x,y,u,v)$ determines the section $j^kh$ of $J^k\pi$ in which those standard coordinates are the usual partial derivatives of $h$.

The space $J^k\pi$ comes equipped with a distribution (a sub-bundle of the tangent bundle),
called the Cartan distribution.
A PDE of order $k$ is considered as a submanifold of $J^k \pi$, and its solutions correspond to maximal integral manifolds of the Cartan distribution restricted to the PDE. For a detailed review of jets, we refer to \cite{O,KL1}.
The prolongation $\E_k\subset J^k\pi$ is the locus of differential corollaries of the defining equation
of $\E_1$ up to order $k$. We also let $\E_0=J^0\pi=E$.

The vanishing of the Ricci tensor is equivalent to the condition $h_{xx}+h_{yy}=0$.
This yields a sub-equation $\ric_2 \subset \E_2 \subset J^2\pi$, whose prolongations we denote by
$\ric_k \subset J^k \pi$. Since this case of vacuum Kundt waves was considered thoroughly in \cite{MMC}
we will focus here mostly on general Kundt waves. However, after finding the differential invariants in
the general case it is not difficult to describe the differential invariants in the vacuum case.
This will be done in Section \ref{S3}.

\subsection{Lie pseudogroup}\label{S1.1}

The Lie pseudogroup of transformations preserving the shape (i.e.\ form of the metric)
can be found by pulling back $g$ from (\ref{KW}) through a general transformation
$(\tilde x,\tilde y,\tilde u,\tilde v) \mapsto (x,y,u,v)$,
and then requiring that the obtained metric is of the same shape:
 \[
d \tilde x^2+d \tilde y^2-d \tilde u\,\Bigl(d \tilde v-\tfrac{2\tilde v}{\tilde x}\,d\tilde x+
\bigl(8\tilde x \tilde h-\tfrac{\tilde v^2}{4\tilde x^2}\bigr)\,d\tilde u \Bigr).
 \]
This requirement can be given in terms of differential equations on $x,y,u,v$ as functions of
$\tilde x,\tilde y,\tilde u,\tilde v$, with the (invertible) solutions described below.
The obtained differential equations are independent of whether the Kundt wave is Ricci-flat or not,
so the shape-preserving Lie pseudogroup is the same for both general and Ricci-flat Kundt waves.

A pseudogroup preserving shape (\ref{KW}) contains transformations of the form
(we also indicate their lift to $J^0\pi=E$)
 \begin{align}
x \mapsto x, \quad y &\mapsto y+C, \quad u \mapsto F(u), \quad v \mapsto \frac{v}{F'(u)}-2 \frac{F''(u)}{F'(u)^2} x^2,
\label{e2}\\
h &\mapsto \frac{h}{F'(u)^2}+\frac{2 F'''(u) F'(u)-3 F''(u)^2}{8 F'(u)^4}x,
\label{e3}
 \end{align}
where $F$ is a local diffeomorphism of the real line, i.e.\ $F'(u)\neq0$.
This Lie pseudogroup was already described in \cite{PPCM}, formula (A.37).

Transformations \eqref{e2}-\eqref{e3} form the Zariski connected component $\Sym_0$ of
the entire Lie pseudogroup $\Sym$ of shape-preserving transformations.
(Note that $\Sym_0$ differs from the topologically connected component of unity given by $F'(u)>0$.)
The pseudogroup $\Sym$ is generated, in addition to transformations \eqref{e2}-\eqref{e3}, by the maps
$y \mapsto - y$ and $(x,h) \mapsto (-x, -h)$ preserving shape (\ref{KW}).
Note that $\Sym/\Sym_0=\mathbb Z_2 \times \mathbb Z_2$.

The Lie algebra sheaf $\sym$ of vector fields corresponding to $\Sym$ (and $\Sym_0$) is spanned by the vector fields
\begin{equation}
X = \partial_y, \quad Y(f) = 4 f \partial_u-(4 v f'+8x^2 f'') \partial_v+(xf'''-8hf') \partial_h
\end{equation}
where $f=f(u) \in C_{\text{loc}}^\infty(\mathbb R)$ is an arbitrary function.

When looking for differential invariants, it is important to distinguish between $\Sym$ and $\Sym_0$.
Firstly, differential $\Sym_0$-invariants need not be $\Sym$-invariant. Secondly, a set of differential
invariants that separates $\Sym$-orbits as a rule will not separate $\Sym_0$-orbits.
We will restrict our attention to the $\Sym$-action while outlining the changes needed to be made
for the other choices of the Lie pseudogroup.

\subsection{Differential invariants and the global Lie-Tresse theorem}\label{S1.2}

A differential invariant of order $k$ is a function on $\E_k$ which is constant on orbits of $\Sym$.
In accordance with \cite{KL2} we consider only invariants that are rational in the fibers of $\pi_k:\E_k\to E$
for every $k$.

The global Lie-Tresse theorem states that for algebraic transitive Lie pseudogroups, rational differential invariants separate orbits in general position in $\E_\infty$ (i.e.\ orbits in the complement of a
Zariski-closed subset), and the field of rational differential invariants is generated by a finite number
of differential invariants and invariant derivations. In fact it suffices to consider the (sub)algebra
of invariants that are rational on fibers of $\pi_{\ell}:\E_\ell\to E$ and polynomial on fibers
of $\pi_{k,\ell}:\E_k\to\E_\ell$ for some $\ell$. In the case of Kundt waves we will show that $\ell=2$.
For simplicity we will mostly discuss the field of rational invariants in what follows.

We refer to \cite{KL2} for the details of the theory which holds for transitive Lie pseudogroups.
The Lie pseudogroup we consider is not transitive: the $\Sym$-orbit foliation of $E$ is $\{x=\op{const}\}$.
Let us justify validity of a version of the Lie-Tresse theorem for our Lie pseudogroup action.

For every $a\in E$ the action of the stabilizer of $a$ in $\Sym_0$ is algebraic on the fiber $\pi_{\infty,0}^{-1}(a)$,
and so for every $k$ and $a$ we have an algebraic action of a Lie group on the algebraic manifold of
$\pi_{k,0}^{-1}(a)$. By Rosenlicht's theorem rational invariants separate orbits in general position.
It is important that the dependence of the action on $a$ is algebraic.

From the description of the $\Sym_0$ action on $E$ it is clear that orbits in general position intersect with
the fiber over $a(x)=(x,0,0,0,1)$ for a unique $x\in\R\setminus\{0\}$. A $\Sym$-orbit in $\E_\infty$
intersecting with the fiber of $a(x)$ intersects $a(-x)$ as well. Thus we can separate orbits with scalar
differential invariants, in addition to the invariant $x$ or $x^2$, for $\Sym_0$ or $\Sym$ respectively.
It is not difficult to see, following \cite{KL2}, that in our case the field of differential
invariants is still finitely generated. We skip the details because this will be apparent from our explicit
description of the generators of this field in what follows.

\subsection{The Hilbert and Poincar\'e functions}\label{S1.3}

The transcendence degree of the field of rational differential invariants of order $k$
(that is the minimal number of generators of this field, possibly up to algebraic extensions)
is equal to the codimension of the $\sym$-orbits in general position in $\E_k$. The results in this section are valid for both $\Sym_0$ and $\Sym$ and all intermediate Lie pseudogroups (there are three of them since the quotient $\Sym/\Sym_0$ is the Klein four-group).

For $k\geq 0$, the dimension of $J^k \pi$ is given by
\[\dim J^k \pi = 4+ \binom{k+4}{4}.  \]
The number of independent equations defining $\E_k$ is $\binom{k+3}{4}$ which yields \[\dim \E_k = \dim J^k \pi- \binom{k+3}{4} = 4+ \binom{k+3}{3}, \quad k \geq 0.\]

For small $k$, the dimension of a $\sym$-orbit in $J^k \pi$ in general position may be found by computing the dimension of the span of $\sym|_{\theta_k} \subset T_{\theta_k} J^k\pi$ for a general point $\theta_k \in J^k\pi$. It turns out that
the equation $\E_k$ intersects with regular orbits, so we get the same results by choosing $\theta_k\in\E_k$.

\begin{theorem}\label{counting}
	The dimension of a $\sym$-orbit in general position in $\E_k$ is $4$ for $k=0$ and it is equal to $k+5$ for $k>0$.
\end{theorem}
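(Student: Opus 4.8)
The plan is to compute the dimension of the span of $\sym|_{\theta_k}$ at a generic point $\theta_k \in \E_k$ directly, exploiting the explicit form of the generating vector fields $X = \p_y$ and $Y(f)$. First I would prolong the action of $\sym$ to $J^k\pi$. Since $X$ acts trivially on all jet-coordinates except by translation in $y$, its prolongation is just $\p_y$ again, contributing one dimension independent of everything else. The substantive part is the family $Y(f)$, whose prolongation $Y^{(k)}(f)$ is obtained by the standard prolongation formula; because $Y(f)$ depends linearly on the jet $(f, f', f'', f''', \dots)$ of the arbitrary function $f=f(u)$ evaluated at the base point, the vectors $\{Y^{(k)}(f)|_{\theta_k}\}$ span, as $f$ ranges over $C^\infty_{\mathrm{loc}}(\R)$, the image of a linear map from the space of $\infty$-jets of $f$ at $u$. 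The key observation is that this map factors through a \emph{finite} jet of $f$: only derivatives $f^{(0)}, \dots, f^{(m_k)}$ up to some order $m_k$ can appear in $Y^{(k)}(f)$ restricted to $J^k\pi$, since each prolongation step (differentiating along $D_x, D_y, D_u, D_v$) raises the order of $f$ by at most one and we start from $f'''$ in the $\p_h$-component at order $0$. Counting carefully, $m_k = k+3$, so the orbit dimension is at most $1 + (k+4)$, and the claim is that this bound is attained at generic $\theta_k$.

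Concretely, I would argue as follows. At order $k=0$ on $E$: the vector fields $X$, $Y(1)$, $Y(u)$, $Y(u^2)$ (i.e.\ $f$ constant, linear, quadratic — note $f''' $ and higher vanish, and these four give $\p_y$, a combination moving $u,v,h$) already span a $4$-dimensional space at a generic point, and this is visibly the maximum since $\dim$ of the $\sym$-orbit in $E$ equals the codimension of $\{x=\mathrm{const}\}$, which is $4$; this gives the $k=0$ case. For $k>0$, I would show that the coefficients of $f^{(0)},f^{(1)},\dots,f^{(k+3)}$ in $Y^{(k)}(f)|_{\theta_k}$, together with $\p_y$, are linearly independent vectors in $T_{\theta_k}\E_k$ for generic $\theta_k$. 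The cleanest way is an induction on $k$: assuming independence at level $k-1$, the new jet-coordinates at level $k$ are $h_{x^iy^j u^l v^m}$ with $i+j+l+m=k$ (subject to $m=0$ on $\E_k$, so in fact $i+j+l=k$), and I would exhibit that $Y^{(k)}(f)$ with $f = u^{k+3}$ has nonzero component along one of the genuinely new coordinate directions (say $h_{v^{?}}$ or, on $\E_k$ where $h_v=0$, the top $u$-derivative $h_{u^k}$) while all lower $f^{(j)}$ do not reach that direction. This triangular structure forces the rank to jump by exactly one when passing from $k-1$ to $k$, giving orbit dimension $(k-1+5)+1 = k+5$.

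The main obstacle I anticipate is bookkeeping the prolongation of $Y(f)$ on the \emph{constrained} manifold $\E_k$ rather than on all of $J^k\pi$: one must check that the tangency of $\sym$ to $\E_k$ (guaranteed by shape-preservation, as noted in the excerpt) does not cause an unexpected drop in rank, i.e.\ that the "new directions" used in the induction actually survive as coordinates on $\E_k$. Since $\E_k$ is cut out by $h_v = 0$ and its differential consequences, the surviving jet-coordinates are exactly $h_{x^iy^ju^l}$ with $i+j+l = k$, of which there are $\binom{k+2}{2}$; I would make sure the witness vector fields $Y^{(k)}(u^{j})$ are chosen so their new components lie in this set. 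A secondary point is verifying genericity: the rank conditions define a Zariski-open subset of $\E_k$, and one checks it is nonempty by evaluating at an explicit simple jet (e.g.\ the $k$-jet of $h \equiv $ a generic polynomial harmonic-free function), which also confirms the earlier remark that $\E_k$ meets the regular orbits so the same count holds on $J^k\pi$ and on $\E_k$.
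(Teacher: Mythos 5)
Your overall strategy --- prolong $X$ and $Y(f)$, bound the orbit dimension above by noting that $Y(f)^{(k)}$ at a fixed point depends linearly on $f,f',\dots,f^{(k+3)}$ (hence at most $k+5$ together with $\partial_y$), and realize the bound at an explicit point --- is the paper's strategy. However, the induction you propose for the lower bound has a genuine off-by-one problem at its base. Your inductive hypothesis at level $k-1$ is that the coefficient vectors of $f^{(0)},\dots,f^{(k+2)}$ together with $\partial_y$ are independent, i.e.\ that the orbit dimension at level $k-1$ equals $(k-1)+5$. For $k-1=0$ this is false: at order $0$ the five vectors $X,Y(1),\dots,Y(u^3)$ span only a $4$-dimensional space (no generator has a $\partial_x$-component, and $Y(1),\dots,Y(u^3)$ only reach $\partial_u,\partial_v,\partial_h$). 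Consequently your step ``add the single new vertical vector $Y(u^{k+3})^{(k)}$, so the rank jumps by exactly one'' gives, starting from $4$ at $k=0$, only the lower bound $k+4$ for $k\ge 1$ --- one short of the claim. In fact the jump from $k=0$ to $k=1$ is by two, not one: besides the new parameter $f^{(4)}$ (entering through $Y(u^4)^{(1)}$, which hits $\partial_{h_u}$), the degeneracy among $X,Y(1),\dots,Y(u^3)$ present at order $0$ disappears at order $1$, since e.g.\ $Y(u^3)^{(1)}$ acquires an independent $\partial_{h_x}$-component coming from $\D_x\phi=f'''+\dots$. So ``rank jumps by exactly one'' is simply false for that step, and your separate $k=0$ discussion cannot seed the induction.

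The fix is straightforward and essentially returns you to the paper's one-shot argument: verify the case $k=1$ directly. At the point $x=1$, $h=1$, $y=u=v=0$, all jet coordinates of positive order zero, one finds $X^{(1)}=\partial_y$, $Y(1)^{(1)}=4\partial_u$, $Y(u)^{(1)}=-8\partial_h$, $Y(u^2)^{(1)}=-16(\partial_v+\partial_{h_u})$, $Y(u^3)^{(1)}=6(\partial_h+\partial_{h_x})$, $Y(u^4)^{(1)}=24\partial_{h_u}$, which span a $6$-dimensional subspace of $T\E_1$, as required. With that base, your step for $k\ge 2$ is sound: the lifts of the $k+4$ independent vectors stay independent because they project onto independent vectors under $\E_k\to\E_{k-1}$, while $Y(u^{k+3})^{(k)}$ projects to zero yet is nonzero (its $\partial_{h_{u^k}}$-coefficient is $\D_{u^k}\phi=xf^{(k+3)}+\dots=(k+3)!$ at that point, and $h_{u^k}$ survives as a coordinate on $\E_k$), hence is independent of them. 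The paper avoids the induction by checking at this same point that $X^{(k)},Y(1)^{(k)},\dots,Y(u^{k+3})^{(k)}$ already span $k+5$ directions for every $k$. A minor wording issue: at $k=0$ the orbit is the leaf $\{x=\mathrm{const}\}$, which has \emph{dimension} $4$ (codimension $1$); your phrase ``codimension of $\{x=\mathrm{const}\}$, which is $4$'' should say dimension.
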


\begin{proof}
We need to compute the dimension of the span of $X^{(k)}$ and $Y(f)^{(k)}$ at a point in general position in $\E_k$. The $k$-th prolongation of the vector field $Y(f)$ is given by
\begin{equation}
 Y(f)^{(k)}= 4 f \D_u^{(k+1)}-(4 v f'+8x^2 f'') \D_v^{(k+1)} + \sum_{|\sigma| \leq k} \D_\sigma (\phi) \partial_{h_\sigma} \label{prolongation}
\end{equation}
where $\sigma=(i_1,\dots,i_t)$ is a multi-index of length $|\sigma|=t$
($i_j$ corresponds to one of the base coordinates $x,y,u,v$),
$\D_\sigma=\D_{i_1}\cdots\D_{i_t}$ is the iterated total derivative, $\D_i^{k+1}$ is the truncated total derivative
as a derivation on $J^k\pi$, and
 \begin{align*}
\phi =\,& Y(f)\lrcorner\,(dh-h_xdx-h_ydy-h_udu-h_vdv)\\
=\,& x f'''-8hf'-4f\,h_u+(4vf'+8x^2 f'')\,h_v
 \end{align*}
is the generating function for $Y(f)$; we refer to Section 1.5 in \cite{KL1}.
We see that the $k$-th prolongation depends on $f,f',...,f^{(k+3)}$.

We can without loss of generality assume that the $u$-coordinate of our point in general position is $0$, since $\partial_u$ is contained in $\sym$. At $u=0$ the vector field $Y(f)^{(k)}$ depends only on the $(k+3)$-degree Taylor polynomial of $f$ at $u=0$, which implies that there are at most $k+4$ independent vector fields among these. Adding the vector field $X^{(k)}$
to them gives $k+5$ as an upper bound of the dimension of an orbit.

Let $\theta_k \in \E_k$ be the point defined by $x=1, h=1$, with all other jet-variables set to $0$ and let $Z_m=Y(u^m)$. It is clear from (\ref{prolongation}) that the $k$-th prolongations of $X, Z_0,...,Z_{k+3}$ span a $(k+5)$-dimensional subspace of $T_{\theta_k} \E_k$, implying that $k+5$ is also a lower bound for the dimension of an orbit in general position and verifying the claim of the theorem.
\end{proof}

Let $s_k^\E$ denote the codimension of an orbit in general position inside of $\E_k$, i.e.\ the number of independent
differential invariants of order $k$. It is given by
 \[
s^\E_0=1\ \text{ and }\ s_k^\E= \frac{k}{6}(k+5)(k+1) \text{ for } k \geq 1.
 \]
The Hilbert function $H_k^\E=s_k^\E-s_{k-1}^\E$ is given by
 $$
H_0^\E=H_1^\E=1\ \text{ and }\
H_k^\E = \frac{k(k+3)}{2} \text{ for } k\geq 2.
 $$
This counts the number of independent differential invariants of ``pure'' order $k$.
For small $k$ the results are summed up in the following table.
 \[
\begin{array}{|c|rrrrrrr|}
\hline
k	& 0 & 1 & 2 & 3 & 4 & 5 & 6\\ \hline
\dim J^k \pi & 5 & 9 & 19 & 39 & 74 & 130 & 214  \\
\dim \E_k	& 5 & 8 & 14 & 24 & 39 & 60 & 88 \\
\dim \mathcal{O}_k	& 4 & 6 & 7 & 8 & 9 & 10 & 11  \\
s_k^{\E} & 1 & 2 & 7 & 16 & 30 & 50 & 77 \\
H_k^{\E} & 1 & 1 & 5 & 9 & 14 & 20 & 27 \\\hline
\end{array}
 \]
The corresponding Poincar\'e function $P_\E(z)=\sum_{k=0}^\infty H_k^\E z^k$ is given by
 \[
P_\E(z) = \frac{1-2z+5z^2-4z^3+z^4}{(1-z)^3}.
 \]

\section{Differential invariants of Kundt waves}\label{S2}

We give a complete description of the field of rational differential invariants. We will focus
on the action of the entire Lie pseudogroup $\Sym$ (with four Zariski connected components),
while also describing what to do if one wants to consider only one (or two) connected components.

\subsection{Generators}\label{S2.1}

The second order differential invariants of the $\Sym$-action are generated by the following seven functions
 \begin{align*}
I_0 &= x^2, &
I_1 &= \frac{(x h_x-h)^2}{h_y^2}, &
I_{2a} &= \frac{h_{xx}}{x h_x-h}, \\
I_{2b} &= \frac{x h_{xy}}{h_y},   &
I_{2c} &= \frac{h_{yy}}{x h_x-h}, &
I_{2d} &= \frac{(x^2 h_{yu}-v h_y)^2}{x (x h_x-h)^3 }, \\
& \hphantom{a}\hskip20pt I_{2e} = \frac{(x^3 h_{xu}-v x h_x-x^2 h_u+v h)(x h_x-h)}{(x^2 h_{yu}-v h_y) h_y} \hspace{-7cm}
 \end{align*}
and these invariants separate orbits of general position in $\E_2$. They are independent as functions on $\E_2$,
and one verifies that the number of invariants agrees with the Hilbert function $H_k^\E$ for $k=0,1,2$.

Note that $\sqrt{I_0}=x$ and $\sqrt{I_1} = \tfrac{x h_x-h}{h_y}$ are not invariant under the discrete transformations $(x,h)\mapsto (-x,-h)$ and $y \mapsto -y$. They are however invariant under the
Zariski connected pseudogroup $\Sym_0$ and should be used for generating the field of differential
$\Sym_0$-invariants, since the invariants above do not separate $\Sym_0$-orbits on $\E_2$.

 \begin{remark}
If $\mathcal A_2$ denotes the field of second order differential $\Sym$-invariants and $\mathcal B_2$ the field of second order differential $\Sym_0$-invariants, then $\mathcal B_2$ is an algebraic field extension of $\mathcal A_2$ of degree $4$ and its Galois group is $\Sym/\Sym_0= \mathbb Z_2 \times \mathbb Z_2$. Intermediate pseudogroups lying between $\Sym_0$ and $\Sym$ are in one-to-one correspondence with subgroups of $\mathbb Z_2 \times \mathbb Z_2$ that, by Galois theory,  are
in one-to-one correspondence with algebraic field extensions of $\mathcal A_2$ that are
contained in $\mathcal B_2$.
	
Including $\mathcal B_2$ there are four such nontrivial algebraic extensions of $\mathcal A_2$, and they are the splitting fields of the polynomials $t^2-I_0$, $t^2-I_1$, $t^2-I_0 I_1$ and  $(t^2-I_0)(t^2-I_1)$ over $\mathcal A_2$, respectively.
	
Higher-order invariants are generated by second-order invariants and invariant derivations, so the field of all differential invariants
depends solely on the chosen field extension of $\mathcal A_2$.
 \end{remark}


In order to generate higher-order differential invariants we use invariant derivations, i.e.\ derivations on $\E_\infty$ commuting with the $\Sym$-action. It is not difficult to check that the following derivations are invariant.
 \begin{gather*}
\nabla_1 =x D_x+2v D_v,\quad
\nabla_2= \frac{x h_x-h}{h_y}\,D_y,\quad
\nabla_4=\frac{x^2 h_{yu}-v h_y}{h_y}\,D_v,\\
\nabla_3=\frac{h_y}{x^2 h_{yu}-v h_y} \left(D_u-\Bigl(8x^2 h_x-\frac{v^2}{4x^2}\Bigr) D_v  \right).
 \end{gather*}

 \begin{theorem}\label{generators}
The field of rational scalar differential invariants of $\Sym$ is generated by the second-order invariants
$I_0,I_1,I_{2a},I_{2b},I_{2c},I_{2d},I_{2e}$
together with the invariant derivations $\nabla_1,\nabla_2,\nabla_3,\nabla_4$.

The algebra of rational differential invariants, which are polynomial
starting from the jet-level $\ell=2$, over $\mathcal A_2$, $\mathcal B_2$ or an intermediate field,
depending on the choice of Lie pseudogroup, is generated by the above seven
second-order invariants (with possible passage from $I_0$ to $\sqrt{I_0}$ and from $I_1$ to $\sqrt{I_1}$)
and the above four invariant derivations.
 \end{theorem}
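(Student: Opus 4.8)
The plan is to prove Theorem \ref{generators} in two stages: first establish that the seven listed second-order invariants together with the four invariant derivations generate the whole field of rational differential invariants, and then upgrade this to the stronger polynomiality statement from jet-level $\ell=2$ onward. For the first stage I would proceed by induction on the jet order $k$. The base cases $k\le 2$ are already handled: by the discussion preceding Theorem \ref{generators} the invariants $I_0,I_1,I_{2a},\dots,I_{2e}$ are independent on $\E_2$ and separate orbits in general position there, and their count $1,1,5$ matches the Hilbert function $H_k^\E$ for $k=0,1,2$. For the inductive step one uses the standard mechanism from \cite{KL2}: if the field of invariants of order $\le k$ is generated by the seven base invariants and iterated derivations $\nabla_{i_1}\cdots\nabla_{i_t}$ of them with $t\le k-2$, then applying $\nabla_1,\nabla_2,\nabla_3,\nabla_4$ once more produces invariants of order $\le k+1$, and one must check that these, together with the lower-order ones, exhaust the $H_{k+1}^\E=\tfrac{(k+1)(k+4)}2$ independent invariants of pure order $k+1$. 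The key technical input is that the symbol of the prolonged $\sym$-action stabilizes, so that for $k$ large the derivations act ``freely enough'' on the symbolic level; concretely, one computes the symbols of $\nabla_1,\dots,\nabla_4$ and verifies that the map they induce on the associated graded of the invariant field is surjective in each degree $\ge 3$. This is the Lie--Tresse generation argument and it is where the bulk of the verification lies.

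For the counting in the inductive step I would rely on Theorem \ref{counting}, which gives $\dim\mathcal O_k=k+5$ and hence $s_k^\E=\dim\E_k-(k+5)$, together with the explicit Hilbert function $H_k^\E=k(k+3)/2$ for $k\ge 2$. The point to check is that the four derivations $\nabla_1,\dots,\nabla_4$ are ``independent enough'': since $\dim\mathcal O_k$ grows by exactly one when passing from $k$ to $k+1$ for $k\ge 1$, but $H_{k+1}^\E$ grows quadratically, the new invariants of pure order $k+1$ must come from differentiating order-$k$ invariants, and one needs that the collection of all $\nabla_{i_1}\cdots\nabla_{i_t}$ applied to the seven generators surjects onto them. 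The cleanest way to see this is to pass to the fiber over the normalizing point $a(x)=(x,0,0,0,1)$ as in Section \ref{S1.2}: there the stabilizer action is an algebraic group action on an affine space, and one can invoke Rosenlicht plus a dimension count of the image of the derivations' symbols. I would exhibit, as in the proof of Theorem \ref{counting}, an explicit point $\theta_k$ at which the relevant Jacobian has maximal rank, so that genericity is witnessed concretely rather than abstractly.

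The second stage, the polynomiality claim over $\mathcal A_2$ (or $\mathcal B_2$, or an intermediate field), requires more care than the mere field-generation statement, and I expect it to be the main obstacle. The assertion is that every rational differential invariant, when restricted to fibers of $\pi_{k,2}\colon\E_k\to\E_2$, is in fact a polynomial in the higher jet-variables with coefficients that are rational functions of the order-$\le 2$ jets lying in $\mathcal A_2$. To prove this one inspects the explicit form of the four derivations: $\nabla_1=xD_x+2vD_v$ has polynomial (indeed linear) coefficients, $\nabla_2$, $\nabla_3$, $\nabla_4$ have coefficients that are rational in the jets but with denominators involving only the second-order quantities $h_y$, $xh_x-h$, $x^2h_{yu}-vh_y$ and the constant $x$ — all of which are (roots of) elements of $\mathcal A_2$ up to the harmless sign ambiguities noted in the remark. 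Hence applying any word in the $\nabla_i$ to one of the seven generators produces an expression that is polynomial in the jet-variables of order $\ge 3$, divided only by powers of these fixed second-order denominators; clearing those denominators shows the result lies in the polynomial algebra over $\mathcal A_2$. Combined with stage one, which says these words generate the field, this gives that the subalgebra they generate is exactly the algebra of invariants polynomial from level $\ell=2$; the matching with the Poincar\'e function $P_\E(z)$ then confirms no generators are missing. I would close by remarking that the passage $I_0\mapsto\sqrt{I_0}$, $I_1\mapsto\sqrt{I_1}$ is precisely the adjustment needed so that the same statement holds verbatim for $\Sym_0$ and the intermediate pseudogroups, since $\nabla_2$ and $\nabla_4$ are written in terms of $\sqrt{I_1}$-type quantities and are genuinely $\Sym_0$-invariant.
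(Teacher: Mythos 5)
Your first stage follows the same route as the paper: induction on the jet order, with the key input being a symbol computation for invariants obtained by applying $\nabla_1,\dots,\nabla_4$ to the seven generators. However, you stop exactly where the work lies. Saying that one ``verifies that the map induced on the associated graded is surjective in each degree $\ge 3$'' is a restatement of the claim, not a verification: you never specify which derived invariants realize the $H_k^\E$ new jet directions, nor exhibit the maximal-rank point you promise. The paper does this concretely: it proves by induction that there are $H_k^\E$ \emph{quasilinear} invariants of order $k$ whose symbols at a generic point are proportional to $h_{x^iy^ju^l}$ with $i+j+l=k$, $0\le l<k$; the base $k=3$ is the explicit list $\nabla_1I_{2a},\nabla_1I_{2b},\nabla_1I_{2c},\nabla_1I_{2d},\nabla_1I_{2e},\nabla_2I_{2c},\nabla_2I_{2d},\nabla_3I_{2d},\nabla_3I_{2e}$, and in the step $\nabla_1$ reproduces $k(k+3)/2$ symbols, $\nabla_2$ adds $k$ more, and $\nabla_3$ contributes the two symbols $h_{xu^k}$ and $h_{yu^k}$, giving $(k+1)(k+4)/2$ in total. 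Also, your remark that the slow growth of $\dim\mathcal{O}_k$ versus the quadratic growth of $H_{k+1}^\E$ ``forces'' the new invariants to come from differentiation is not an argument; that is precisely what must be proved, and Rosenlicht's theorem (which concerns separation of orbits) does not supply it.

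The more serious gap is in your second stage. You show that the generated invariants lie in the algebra of invariants polynomial from jet-level $\ell=2$ (since the derivations' denominators involve only second-order quantities), and then conclude from field generation plus the Poincar\'e-function count that they \emph{generate} this algebra. That inference is invalid: for instance $t^2$ and $t^3$ generate the field $\mathbb{Q}(t)$ and lie in $\mathbb{Q}[t]$, yet generate only the proper subalgebra $\mathbb{Q}[t^2,t^3]$; transcendence-degree or Poincar\'e counts cannot detect the discrepancy because they only see the field. The missing idea, and the one the paper uses, is quasilinearity: the derived invariants are affine-linear in the top-order jet variables with independent symbols, so given any invariant polynomial in the jets of order $\ge 3$ one can eliminate its top-order variables against the generators order by order, reducing it to an element of $\mathcal A_2$ (respectively $\mathcal B_2$ or an intermediate field). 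Your denominator observation gives only the inclusion of the generated subalgebra into the polynomial algebra; the elimination argument is needed for the reverse direction, and without it the algebra statement of the theorem is not established.
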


 \begin{proof}
We shall prove that the field generated by the indicated differential invariants and invariant derivations
for every $k>2$ contains $H_k^\E=\frac{k(k+3)}{2}$ functionally independent invariants, and moreover
that their symbols are quasilinear and independent. This together with the fact that the indicated invariants
generate all differential invariants of order $\leq2$ implies the statement of the theorem.

We demonstrate by induction in $k$ a more general claim that there are $H_k^\E$
quasilinear differential invariants of order $k$ with the symbols at generic $\theta_{k-1}\in J^{k-1}\pi$
proportional to $h_{x^i y^{j} u^l}$, where $i+j+l=k$ and $0 \leq l <k$. The number of such $k$-jets is indeed equal to
the value of the Hilbert function $H_k^\E$.

The base $k=3$ follows by direct computation of the symbols of $\nabla_1I_{2a}, \nabla_1I_{2b}, \nabla_1I_{2c},
\nabla_1I_{2d}, \nabla_1I_{2e}, \nabla_2I_{2c}, \nabla_2I_{2d}, \nabla_3I_{2d}, \nabla_3I_{2e}$.
Assuming the $k$-th claim, application of $\nabla_1$ gives $k(k+3)/2$ differential invariants of order $k+1$,
and $\nabla_2$ adds $k$ additional differential invariants, covering the symbols $h_{x^i y^{j} u^l}$
with $i+j+l=k+1$ and $0 \leq l <k$.
Further application of $\nabla_3$ gives $2$ more differential invariants with symbols $h_{xu^k}$, $h_{yu^k}$.
Thus the invariants are independent and the calculation
 \[
\frac{k(k+3)}{2} + k+2= \frac{(k+1)(k+4)}{2}
 \]
completes the induction step.

For the algebra of invariants it is enough to note that our generating set produces
invariants that are quasi-linear in jets of order $\ell=2$ or higher, and so any differential invariant
can be modified by elimination to an element in the base field $\mathcal A_2$, $\mathcal B_2$
or an intermediate field.
 \end{proof}

 \begin{remark}
As follows from the proof it suffices to have only derivations $\nabla_1,\nabla_2,\nabla_3$.
Yet $\nabla_4$ is obtained from those by commutators.
 \end{remark}

It is possible to give a more concise description of the field/algebra of differential invariants than that of
Theorem \ref{generators}. Let $\alpha_i$ denote the horizontal coframe dual to the derivations $\nabla_i$,
i.e.
 \begin{gather*}
\alpha_1 =\frac1x\,dx,\quad
\alpha_2= \frac{h_y}{x h_x-h}\,dy,\quad
\alpha_3=\frac{x^2 h_{yu}-v h_y}{h_y}\,du,\\
\alpha_4=\frac{h_y}{x^2 h_{yu}-v h_y} \left(dv-\frac{2v}x\,dx+\Bigl(8x^2 h_x-\frac{v^2}{4x^2}\Bigr) du \right).
 \end{gather*}
Then we have:
 $$
\alpha_1 \wedge \alpha_2 \wedge \alpha_3 \wedge \alpha_4= (I_0 I_1)^{-1/2} dx \wedge dy\wedge du\wedge dv.
  $$
Metric (\ref{KW}) written in terms of this coframe has coefficients $g_{ij}=g(\nabla_i,\nabla_j)$ and
therefore has the form
 \[
g=I_0  \alpha_1^2+I_1  \alpha_2^2+8 (I_1 I_{2d})^{-1}  \alpha_3^2 - \alpha_3  \alpha_4.
 \]
This suggests that $\nabla_i$ and $I_0,I_1,I_{2d}$ generate the field of differential invariants.
This is indeed true, and can be demonstrated as follows.

The differential invariants appearing as nonzero coefficients in the commutation relations $[\nabla_i, \nabla_j]=K_{ij}^k \nabla_k$ are given by
\com{
 \begin{gather*}
K_{12}^2 = (I_0 I_{2a}-I_{2b}),\
K_{13}^3 = - (I_0 \nabla_3(I_{2b}) +2),\
K_{13}^4 =-\frac{8I_0 I_{2a}}{I_1 I_{2d}},\\
K_{14}^4 = I_0 \nabla_3(I_{2b}),\
K_{23}^2 = - \frac{\nabla_3(I_1)}{2I_1},\
K_{23}^3 = (I_1-I_{2e}) I_{2c}-I_0 I_1 \nabla_3(I_{2c}),\\
K_{23}^4 = \frac{-8I_{2b}}{I_1I_{2d}},
K_{24}^4 = -K_{23}^3,
K_{34}^3 = -1,
K_{34}^4 = \frac{I_{2e}}{2I_0I_1} -\frac{I_1 I_{2d}}{2} \nabla_3\Bigl(\frac1{I_1 I_{2d}}\Bigr).
 \end{gather*}
}
 \begin{gather*}
K_{12}^2 = (I_0 I_{2a}-I_{2b}),\
K_{13}^3 = - (I_0 \nabla_3(I_{2b}) +2),\
K_{13}^4 =-\frac{8I_0 I_{2a}}{I_1 I_{2d}},\\
K_{23}^2 = - \frac{\nabla_3(I_1)}{2I_1},\,
K_{23}^3 = I_{2c}(I_1-I_{2e})-I_0 I_1 \nabla_3(I_{2c}) = -K_{24}^4,\,
K_{34}^3 = -1,\\
K_{14}^4 = I_0 \nabla_3(I_{2b}),\
K_{23}^4 = -\frac{8I_{2b}}{I_1I_{2d}},\
K_{34}^4 = \frac{I_{2e}}{2I_0I_1} -\frac{I_1 I_{2d}}{2} \nabla_3\Bigl(\frac1{I_1 I_{2d}}\Bigr).
 \end{gather*}
In particular we can get the differential invariants $I_{2a},I_{2b},I_{2c},I_{2e}$ from $K_{13}^4, \nabla_1(I_1), \nabla_2(I_1), \nabla_3(I_1)$ thereby verifying that $I_0,I_1,I_{2d}$ are in fact sufficient to be a generating set of differential invariants.

 \begin{remark}
For the $\Sym_0$-action, the invariant derivations  $D_x+\frac{2v}{x} D_v$ and $D_y$ should be used instead of
$\nabla_1,\nabla_2$ (they are not invariant under the reflections). In this case only one coefficient of $g$
is nonconstant, suggesting that one differential invariant and four invariant derivations are sufficient for
generating the field of differential invariants.
 \end{remark}

\subsection{Syzygies}\label{S2.2}

Differential relations among the generators of the algebra of differential invariants are called
differential syzygies. They enter the quotient equation, describing the equivalence classes $\E_\infty/\Sym$.

To simplify notations let us rename the generators $a=I_0,b=I_1,c=I_1 I_{2d}$ and use the iterated derivatives
$f_{i_1...i_r} = (\nabla_{i_r} \circ \cdots \circ \nabla_{i_1})(f)$ for $f=a,b,c$.
We can generate all differential invariants of order $k$ by using only these and $\nabla_1^{k-2} (K_{13}^4)$.
The syzygies coming from the commutation relations of $\nabla_i$ have been described in the previous section.
Thus it is sufficient to only consider iterated derivatives that satisfy $i_1 \leq \cdots \leq i_r$.

These are generated by some simple syzygies
\begin{align*}
a_1=2 a, \quad a_2=0, \quad a_3=0, \quad
a_4=0, \quad b_4=0, \quad c_4=-2 c
\end{align*}
and by two more complicated syzygies that involve differentiation of $b,c$ with respect to
$\nabla_1$, $\nabla_2$, $\nabla_3$ up to order three:
\begin{align*}
0 = &2a^2c^2(2b^2b_{3}b_{233}-2b^2b_{23}b_{33}-3bb_{3}^2b_{23}+3b_{2}b_{3}^3)
-ab(4b^2b_{3}cc_{13}\\ -&4b^2b_{3}cc_{23}-4b^2b_{3}c_{1}c_{3}+4b^2b_{3}c_{2}c_{3}+8b^2b_{33}c^2 -4b^2b_{33}cc_{1} \\+&4b^2b_{33}cc_{2}-2bb_{1}b_{33}c^2-4bb_{3}^2c^2+2bb_{3}^2cc_{1}-4bb_{3}^2cc_{2} +2bb_{3}b_{13}c^2 \\
+&2bb_{3}b_{23}c^2-b_{1}b_{3}^2c^2-3b_{2}b_{3}^2c^2)
-b^2b_{3}c(4bc-2bc_{1}+2bc_{2}-b_{1}c),
\end{align*}
\begin{align*}
0 = &8ab^2c^2(b_{3}b_{123}-b_{3}b_{223}-b_{13}b_{23}+b_{23}^2) \\
+&4abc^2(b_{2}b_{3}b_{13}-b_{2}b_{3}b_{23}-2b_{3}^2b_{12}+4b_{3}^2b_{22})\\
+&ac^2(4b_{1}b_{2}b_{3}^2-12b_{2}^2b_{3}^2)
+16b^3c^2(b_{23}-b_{13}-b_{3}) \\
+&8b^3c((2 c_{1}-2 c_{2}-c_{11}+2 c_{12}-c_{22})b_{3}+(b_{13}-b_{23})(c_{1}-c_{2})) \\
+&b^3(4b_{3}c_{1}^2-8b_{3}c_{1}c_{2}+4b_{3}c_{2}^2)
+bc^2(b_{1}^2b_{3}+2b_{1}b_{2}b_{3})\\
+&b^2c^2(16b_{1}b_{3}+4b_{1}b_{13}-4b_{1}b_{23}-24b_{2}b_{3}-4b_{3}b_{11}+4b_{3}b_{12})\\
+&b^2c(-8b_{1}b_{3}c_{1}+12b_{1}b_{3}c_{2}+12b_{2}b_{3}c_{1}-12b_{2}b_{3}c_{2}).
\end{align*}



\subsection{Comparing Kundt waves}\label{S2.3}


In order to compare two Kundt waves of the form (\ref{KW}) choose four independent differential invariants
$J_1,...,J_4$ of order $k$ such that $\hat d J_1 \wedge \hat d J_2 \wedge \hat d J_3 \wedge \hat d J_4 \neq 0$,
where $\hat d$ is the horizontal differential defined by $(\hat d f) \circ j^k h = d (f \circ j^k h)$ for a
function $f$ on $\E_k$.
Then rewrite the metric in terms of the obtained invariant coframe, similar to what we did in Section \ref{S2.1}:
 \[
g=G_{ij} \hat d J_i \hat d J_j
 \]
where $G_{ij}$ are differential invariants of order $k+1$. For a given Kundt wave metric $g$
the ten invariants $G_{ij}$, expressed as functions of $J_i$, determine its equivalence class.

In practice one can proceed as follows. Let $\hat \partial_i$ be the horizontal frame dual to the coframe
$\hat{d}J_j$. These are commuting invariant derivations, called Tresse derivatives.
In terms of them $G_{ij}=g(\hat \partial_i,\hat \partial_j)$. Together the 14 functions $(J_a,G_{ij})$
determine a map $\sigma_g:M^4\to\R^{14}$ (for a Zariski dense set of $g$)
whose image, called the signature manifold, is the complete invariant of a generic Kundt wave $g$.

In particular, we can take the four \textit{second-order} differential invariants $I_0,I_1,I_{2d},I_{2e}$ that are independent for generic Kundt waves. Then $G_{ij}$ are differential invariants of third order, implying that third order differential invariants are sufficient for classifying generic Kundt waves.

 \begin{remark}
The four-dimensional submanifold $\sigma_g(M^4)\subset\R^{14}$ is not arbitrary.
Indeed, the differential syzygies of the generators $(J_a,G_{ij})$ can be interpreted as a system
of PDE (the quotient equation) with independent $J_a$ and dependent $G_{ij}$.
The signature manifolds, encoding the equivalence classes of Kundt waves, are solutions to this system.
 \end{remark}

\subsection{Example}\label{S2.4}

Consider the class of Kundt waves parametrized by two functions of two variables:
  \begin{equation}\label{Skea+}
h=E(u)-\tfrac14\,\mathcal{S}\bigl(F(u)\bigr)x+F''(u)^2(x^3\pm y),
 \end{equation}
where $\mathcal{S}(F)=\frac{F'''}{F'}-\frac32\bigl(\frac{F''}{F'}\bigr)^2$ is the Schwartz derivative.
This class is $\Sym$-invariant and using the action \eqref{e2}-\eqref{e3} the pseudogroup is
almost fully normalized in passing from this class to
 \begin{equation}\label{Skea}
h(x,y,u)= A(u)+x^3+y.
 \end{equation}
The metric $g$ corresponding to this $h$ was found by Skea in \cite{S} as an example of class of
spacetimes whose invariant classification requires the fifth covariant derivative of the Riemann tensor
(so up to order seven in the metric coefficients $g_{ij}$ equivalently given by $j^7h$).
However with our approach they can be classified via third order differential invariants, and
we will demonstrate how to do it for this simple example.

The transformations from $\Sym_0$ preserving \eqref{Skea} form the two-dimensional non-connected
group $\Sym_0'$: $(x,y,u,A)\mapsto (x,y+c,\pm u+b,A-c)$, and those
of $\Sym$ form the group $\Sym'$ extending $\Sym_0'$ by the map $(x,y,u,A)\mapsto(-x,-y,u,-A)$.
Distinguishing the Kundt waves given by \eqref{Skea+} with respect to pseudogroup $\Sym$ (or $\Sym_0$)
is equivalent to distinguishing the Kundt waves given by \eqref{Skea} with respect to group $\Sym'$
(or $\Sym_0'$).

The differential invariants from Section \ref{S2.1} can be used for this purpose. However the normalization
of \eqref{Skea+} to \eqref{Skea} allows for a reduction from 4-dimensional signature manifolds to signature curves
as follows. The metrics with $A_{uu} \equiv 0$ are easy to classify, so assume $A_{uu} \neq 0$.

The invariants $\sqrt{I_0}=x$, $\sqrt{I_1}=\frac{x h_x-h}{h_y}$, $I_{2d}$, $I_{2e}$ are basic
for the action of $\Sym_0$, and their combination gives simpler invariants $J_1=x$, $J_2=A+y$,
$J_3=v^2$, $J_4= A_u/v$ with
$\frac{\hat dJ_1 \wedge \hat d J_2 \wedge \hat d J_3 \wedge \hat d J_4}
{dx \wedge dy \wedge du \wedge dv}= -2A_{uu}$.
The nonzero coefficients $G_{ij}$ are given by
 \begin{gather*}
G_{11}=1=G_{22}, \quad G_{13}=\frac{J_4}{2 J_1 A_{uu}},\quad G_{14} = \frac{J_3}{J_1 A_{uu}},
\quad G_{23} = -\frac{J_4^2}{2A_{uu}}, \\
G_{33} = -\frac{J_4  (32  J_1^6  J_4-4  J_1^2  J_3  J_4^3+32  J_1^3  J_2  J_4+4  J_1^2  A_{uu}-J_3  J_4)}{16 J_3  A_{uu}^2  J_1^2},\\
G_{34} = \frac{-32 J_1^6 J_4-32 J_4 J_2 J_1^3+(4 J_3 J_4^3-2 A_{uu}) J_1^2+J_4 J_3}{8A_{uu}^2 J_1^2},\\
G_{24} = -\frac{J_3 J_4}{A_{uu}}, \quad G_{44} = \frac{-32 J_1^6 J_3+4 J_1^2 J_3^2 J_4^2-32 J_1^3 J_2 J_3+J_3^2}{4 A_{uu}^2 J_1^2}.
 \end{gather*}
There are five functionally independent invariants, and they are expressed by
$J_1$, $J_2$, $J_3$, $J_4$, $A_{uu}$. Restricted to the specific Kundt wave \eqref{Skea},
only four of them are independent yielding one dependence. This can be interpreted as
a relation between the invariants $A_u^2$ and $A_{uu}$, giving a curve in the plane
due to constraints $A_x=A_y=A_v=0$, and completely determining
the equivalence class. In addition, $A+y$ is a $\Sym_0$-invariant of order 0.

Consequently, two Skea metrics given by \eqref{Skea} are $\Sym_0$-equivalent if their signatures
$\{(A_u(u)^2, A_{uu}(u))\} \subset \mathbb R^2$ coincide as unparametrized curves.
Indeed, let $A_{uu}=f(A_u^2)$ be a signature curve (no restrictions but, for simplicity, we consider
the one that projects injectively to the first components). Viewed as an ODE on $A=A(u)$ it has a
solution uniquely given by the initial data $(A(0),A_u(0))$. This can be arbitrarily changed using
the freedom $(u,y)\mapsto(u+b,y+c)$ of $\Sym_0'$ whence the data encoding $g$ is restored uniquely.

For the $\Sym$-action, we combine the invariants $I_0, I_1 I_{2a}, I_{2d}, I_{2e}$ to construct a
simpler base $J_1=x^2, J_2=(A+y) x, J_3=v^2, J_4=x A_u/v$ of invariants. In this case
we again get $\frac{\hat dJ_1 \wedge \hat d J_2 \wedge \hat d J_3 \wedge \hat d J_4}
{dx \wedge dy \wedge du \wedge dv}= -4x^3 A_{uu}\neq0$, and
basic order 0, 1 and 2 differential invariants for the dimension reduction are
$(A+y)^2$, $A_u^2$, $A_{uu}/(A+y)$.
Proceeding as before we obtain a signature curve $\{(A_u(u)^2, A_{uu}(u)^2)\} \subset \mathbb R^2$ that,
as an unparametrized curve, is a complete $\Sym$-invariant of the Kundt waves of Skea type \eqref{Skea}.

\section{Specification to the vacuum case}\label{S3}

It was argued in Section \ref{S1.1} that the Lie pseudogroup preserving vacuum Kundt waves of the form (\ref{KW})
is the same as the one preserving general Kundt waves of the same form. The PDE
$\ric_k = \{h_{xx}+h_{yy}=0\}^{(k-2)}\cup\E_k$ defining vacuum Kundt waves contains some orbits in
$\E_k$ of maximal dimension. This follows from the proof of Theorem \ref{counting}, since the point
$\theta_k \in \E_k$ chosen there belongs also to $\ric_k$.

This implies that orbits in general position in $\ric_k$ are also orbits in general position in $\E_k$.
Generic vacuum Kundt waves are separated by the invariants
found in Section \ref{S2}, and all previous results are easily adapted to the vacuum case.

\subsection{Hilbert and Poincar\'e function}\label{S3.1}

For vacuum Kundt waves we have additional $\binom{k+1}{3}$ independent differential equations of order $k$ defining $\ric_k\subset\E_k$, so the dimension of $\ric_k$ is $4+(k+1)^2$ for $k\geq 0$.
The codimension of orbits in general position in $\ric_k$ is thus given by
 $$
s^\ric_0=1\ \text{ and }\ s_k^\ric= k(k+1) \text{ for }k\geq1.
 $$
Consequently the Hilbert function $H_k^\ric = s_k^\ric-s_{k-1}^\ric$ is given by
 $$
H_0^\ric=H_1^\ric=1\ \text{ and }\ H_k^\ric =  2k \text{ for } k\geq2.
 $$

The corresponding Poincar\'e function $P_\ric(z)=\sum_{k=0}^\infty H_k^\ric z^k$ is equal to
 \[
P_\ric(z)= \frac{1-z+3z^2-z^3}{(1-z)^2}.
 \]

\subsection{Differential invariants}\label{S3.2}

The differential invariants of second order from Section \ref{S2.1} are still differential invariants in the vacuum case.
The only difference is that two second order invariants $I_{2a},I_{2c}$ become dependent since the vacuum condition implies
$I_{2a}+I_{2c}=0$; in higher order we add differential corollaries of this relation.
It follows that we can generate all $\Sym$-invariants of higher order by using the
differential invariants $I_0, I_1, I_{2d}$ and invariant derivations $\nabla_i$ above.

The differential syzygies found in Section \ref{S2.2} will still hold, but we get some new ones obtained by $\nabla_i$
differentiations of the Ricci-flat condition $I_{2a}+I_{2c}=0$. In terms of the differential invariants
$a,b,c,K_{13}^4$ from Section \ref{S2.2}, the syzygy on $\ric_2$ takes the form
\begin{equation*}
K_{13}^4 b c (a + b)+4 a (2b+ b_{1}+ b_{2}) = 0.
\end{equation*}

The case of $\Sym_0$-invariants is treated similarly.

\subsection{Comparing vacuum Kundt waves} \label{S3.3}

For the basis of differential invariants we can take the same second-order invariants
as for the general Kundt waves: $I_0,I_1,I_{2d},I_{2e}$. Then we express the metric coefficients
$G_{ij}$ in terms of this basis of invariants.

The corresponding four-dimensional signature manifold $\sigma_g(M^4)$ is
restricted by differential syzygies of the general case plus the vacuum constraint.
Considered as an unparametrized submanifold in $\R^{14}$ it completely classifies the metric $g$.

\section{The Cartan-Karlhede algorithm}\label{S4}

Next, we would like to compare the Lie-Tresse approach to differential invariants with Cartan's equivalence method.
We outline the Cartan-Karlhede algorithm for finding differential invariants. The general description of the algorithm
can be found in \cite{Kar}. Its application to vacuum Kundt waves has been recently treated in \cite{MMC}.

\subsection{The algorithm for vacuum Kundt waves}\label{S4.1}

Consider the following null-coframe
in which metric \eqref{KW} has the form $g=2 m \odot \bar{m}-2\ell \odot n$
(as before $h_v=0=h_{xx}+h_{yy}$):
 \[
\ell = du,\quad
n= \frac{1}{2}dv-\frac{v}{x} dx +\left(4xh-\frac{v^2}{8x^2}\right) du,\quad
\begin{array}{l}m=\frac{1}{\sqrt{2}\mathstrut} (dx+i dy),\\ \bar m=\frac{1\mathstrut }{\sqrt{2}}(dx-i dy).\end{array}
 \]
Let $\Delta, D, \delta, \bar\delta$ be the frame dual to coframe $\ell,n,m,\bar m$:
 \[
\Delta = \p_u-\left(8xh-\frac{v^2}{4x^2}\right)\,\p_v,\quad
D= 2\p_v,\quad
\begin{array}{l}\delta=\frac{1}{\sqrt{2}\mathstrut} (\p_x-i\p_y)+\frac{v\sqrt{2}}{x}\,\p_v,\\
\bar\delta=\frac{1}{\sqrt{2}\mathstrut} (\p_x+i\p_y)+\frac{v\sqrt{2}}{x}\,\p_v.\end{array}
 \]

There is a freedom in choosing the (co)frame, encoded as the Cartan bundle.
The general orthonormal frame bundle $\tilde\rho:\tilde{\mathcal P}\to M$ is
a principal bundle with the structure group $O(1,3)$. For Kundt waves the non-twisting
non-expanding shear-free null congruence $\ell$ is up to scale unique, and this reduces
the structure group to the stabilizer $H \subset O(1,3)$ of the line direction $\R\cdot\ell$,
yielding the reduced frame bundle $\rho:\mathcal P\to M$, which is a principal
$H$-subbundle of $\tilde{\mathcal{P}}$.

This so-called parabolic subgroup $H$ has dimension four and the $H$-action on our null (co)frame is given by boosts $(\ell,n)\mapsto (B \ell, B^{-1} n)$,
spins $m \mapsto e^{i\theta} m$ and null rotations $(n,m)\mapsto (n+cm+\bar c \bar m+|c|^2 \ell,m+\bar c \ell)$ about $\ell$, where parameters $B,\theta$ are real and the parameter $c$ is complex.

Let $\nabla$ denote the Levi-Civita connection of $g$, and let $R$ be the Riemann curvature tensor. Written in terms
of the frame, the components of $R$ and its covariant derivatives are invariant functions on $\mathcal P$, but they
are not invariants on $M$. The structure group $H$ acts on them and their $H$-invariant combinations are absolute
differential invariants.

In practice $H$ is used to set as many components of $\nabla^k R$ as possible to constants, as this is a coordinate independent condition for the parameters of $H$.
In the Newman-Penrose formalism \cite{PR}, the Ricci ($\Phi$) and Weyl ($\Psi$) spinors for the Kundt waves are given by
 \[
\Phi_{22}=2x (h_{xx}+h_{yy}), \qquad \Psi_{4} = 2x(h_{xx}-h_{yy}-2i h_{xy}).
 \]
A boost and spin transform $\Psi_4$ to $B^{-2} e^{-2i\theta} \Psi_4$. Thus if $\Psi_4\neq0$ it can be made equal to $1$
by choosing $B^2=4x\sqrt{h_{xx}^2+h_{xy}^2}$ and $e^{2i\theta}= \frac{h_{xx}-i h_{xy}}{\sqrt{h_{xx}^2+h_{xy}^2}}$.

This reduces the frame bundle and the new structure group $H$ is two-dimensional.
In the next step of the Cartan-Karlhede algorithm we use the null-rotations to normalize components of the first covariant derivative of the Weyl spinor.
The benefit of setting $\Psi_4=1$ is that components of the Weyl spinor and its covariant derivatives can be written in terms of the spin-coefficients
and their derivatives. For example, the nonzero components of the first derivative of the Weyl spinor are
 \[
(D \Psi)_{50} = 4 \alpha, \quad (D \Psi)_{51} = 4 \gamma, \quad (D \Psi)_{41} = \tau .
 \]
The null-rotations, with complex parameter $c$, sends $\gamma$ to $\gamma+c \alpha+\frac{5}{4} \bar c \tau$, but leaves
$\alpha$ and $\tau$ unchanged. Assuming that $|\alpha| \neq \frac{5}{4} |\tau|$ it is possible to set $\gamma=0$,
and this fixes the frame. In this case there will be four Cartan invariants of first order in curvature components,
namely the real and imaginary parts of $\alpha$ and $\tau$. They can be expressed in terms of differential invariants as follows:
 \begin{align*}
\alpha &= \frac{-\sqrt{2i}}{8\sqrt{I_0}} \frac{J_-^{1/4}}{J_+^{5/4}}
\left(i\sqrt{I_0I_1}(2I_0I_{2a}^2-I_{2a}+2\nabla_1I_{2a})+2I_{2b}^2-3I_{2b}+2\nabla_1I_{2b}\right)\\
\tau &=  \frac1{\sqrt{2iI_0}} \frac{J_+^{1/4}}{J_-^{1/4}}, \qquad\text{ where }\quad
J_\pm=I_{2b}\pm i\sqrt{I_0I_1}I_{2a}.
 \end{align*}

These give four independent invariant functions on $\mathcal R_\infty$, but when restricted to a vacuum Kundt wave metric
(to the section $j^\infty_Mg\subset \mathcal R_\infty$) at most three of them are independent:
 \[
\hat d (\alpha+\bar \alpha) \wedge \hat d(\alpha-\bar \alpha) \wedge \hat d (\tau+\bar \tau) \wedge \hat d (\tau-\bar \tau)=0.
 \]
The generic stratum of this case corresponds to the invariant branch (0,3,4,4) of the Cartan-Karlhede algorithm in \cite{MMC}.

At the next step of this algorithm the derivatives of the three Cartan invariants from the last step are computed,
resulting in the invariants $\Delta|\tau|, \bar\delta\alpha, \mu, \nu$ (the latter again complex-valued).
One more derivative gives the invariant $\Delta(\Delta|\tau|)$ as a component of the third covariant derivative of the
curvature tensor. Further invariants (when restricted to $j^\infty_Mg$)
will depend on those already constructed, so only 12 real-valued Cartan invariants are required to classify vacuum Kundt waves.
\begin{remark}
	In Section \ref{S2.3} it was stated that 14 differential invariants ($J_a, G_{ij}$) are sufficient for classifying
Kundt waves, but choosing $J_1=I_0,J_2=I_1,J_3=I_{2d},J_4=I_{2e}$ it turns out that we get precisely 12 functionally
independent differential invariants among them.
\end{remark}

\subsection{Cartan invariants vs.\ absolute differential invariants}\label{S4.2}

Let us take a closer look at the relationship between the Cartan invariants and
the differential invariants from Section \ref{S2}.

Differential invariants are functions on $J^\infty \pi$, or on a PDE therein, which are constant on orbits
of the Lie pseudogroup $\mathcal G$. Cartan invariants, on the other hand, are components of the curvature tensor and
its covariant derivatives. These components are dependent on the point in $M$ and the frame.

If we normalize the group parameters and hence fix the frame, i.e.\ a section of the Cartan bundle, then
the Cartan invariants restricted to this section are invariant functions on $J^\infty \pi$.
The following commutative diagram explains the situation.

\begin{center}
\begin{tikzcd}
 &\mathcal{P} \arrow[d, "\rho"] & \arrow[l]\arrow[d] \pi_\infty^* \mathcal P &	\\
&M  &\arrow[l, "\pi_\infty"] \E_\infty  \subset J^\infty \pi
\end{tikzcd}
\end{center}

Initially the Cartan invariants are functions on
 \[
\pi_\infty^* \mathcal P = \{(\omega,g_\infty) \in \mathcal P \times \E_\infty \mid \rho(\omega)=\pi_\infty(g_\infty) \}
 \]
and they suffice to solve the equivalence problem because $\mathcal P$ is equipped with an absolute parallelism $\Omega$
(Cartan connection) whose structure functions generate all invariants on the Cartan bundle.
Indeed, an equivalence of two Lorentzian spaces $(M_1,g_1)$ and $(M_2,g_2)$ lifts to an equivalence between
$(\mathcal P_1,\Omega_1)$ and $(\mathcal P_2,\Omega_2)$ and vise versa
the equivalence upstairs projects to an equivalence downstairs.

Projecting the algebra of invariants on the Cartan bundle to the base we obtain the algebra of
absolute differential invariants consisting of $\mathcal G$-invariant functions on $\E_\infty$.
This is achieved 
by invariantization of the invariants on $\mathcal P$ with respect to the structure group.

This is done in steps by normalizing the group parameters, effecting in further reduction of the structure group.
When the frame is fully normalized (or normalized to a group acting trivially on invariants)
the Cartan bundle is reduced to a section of $\mathcal P$,
restriction to which of the $\nabla^k R$ components gives scalar differential invariants on $M$.
Often these functions and their algebraic combinations that are absolute differential invariants,
evaluated on the metric, are called Cartan invariants.

\subsection{A comparison of the two methods}\label{S4.3}

The definite advantage of Cartan's invariants is their universality. A basic set of invariants can be chosen for almost
the entire class of metrics simultaneously. The syzygies are also fully determined by the commutator relations, the Bianchi and Ricci identities
in the Newman-Penrose formalism \cite{PR}. Yet this basic set is large and algebraically dependent invariants
should be removed, resulting in splitting of the class into different branches of the Cartan-Karlhede algorithm.
See the invariant-count tree for the class of vacuum Kundt waves in \cite{MMC}.

The normalization of group parameters however usually introduces algebraic extensions into the algebra of invariants.
The underlying assumption at the first normalization step in Section \ref{S4.1} is that $\Psi_4$ is nonzero.
This means that
also for Cartan invariants we must restrict to the complement of a Zariski-closed set in $\E_k$.

Setting $\Psi_4$ to $1$ introduces radicals into the expressions of Cartan invariants. A sufficient care with this
is to be taken in the real domain, because the square root is not everywhere defined and is multi-valued.
At this stage it is the choice of the $\pm$ sign, but the multi-valuedness becomes more restrictive
with further invariants. For instance, the expressions for $\alpha$ and $\tau$ contain radicals of $J_\pm$
depending on $\sqrt{I_0I_1}$.

Recall that even though the invariant $I_0$ and $I_1$ are squares, the extraction of the square root
cannot be made $\Sym$-equivariantly and is related to a choice of domain for the pseudogroup $\Sym_0$.
Changing the sign of $\sqrt{I_0I_1}$ results in interchange $J_-\leftrightarrow J_+$ modifying
the formula for $\alpha$ and $\tau$ (which, as presented, is also subject to some sign choices).
The complex radicals carry more multi-valued issues: choosing branch-cuts and restricting to simply connected domains.

Thus Cartan's invariants computed via the normalization technique are only locally defined. In addition,
the domains where they are defined are not Zariski open, in particular they are not dense.

In contrast, elements of the algebra of rational-polynomial differential invariants described in
Section \ref{S2} are defined almost everywhere, on a Zariski-open dense set. The above radicals are avoidable
because we know from Section \ref{S1.2} that generic Kundt waves, as well as vacuum Kundt waves,
can be separated by rational invariants.

Another aspects of comparison is coordinate independence. The class of metrics \eqref{KW}
is given in specific Kundt coordinates, from which we derived the pseudogroup $\Sym$. Changing the coordinates
does not change the pseudogroup, but only its coordinate expression. In other words, this is equivalent to
a conjugation of $\Sym$ in the 
pseudogroup $\op{Diff}_\text{loc}(M)$.

The Cartan-Karlhede algorithm is manifestly coordinate independent, i.e.\ the invariants are 
computed independently of the form in which a Kundt wave is written. However a normalization of parameters is required
to get a canonical frame. It is a simple integration to derive from this Kundt coordinates. It is also
possible to skip integration with the differential invariants approach as abstractly jets are
coordinate independent objects. This would give an equivalent output.

\section{Conclusion}\label{S5}

In this paper we discussed Kundt waves, a class of metrics that are not distinguished by Weyl's scalar curvature
invariants. We computed the algebra of scalar differential invariants that separate generic metrics in the class
and showed that this algebra is finitely generated in Lie-Tresse sense globally.
These invariants also separate the important sub-class of vacuum Kundt waves.

The latter class of metrics was previously investigated via Cartan's curvature invariants in \cite{MMC}
and we compared the two approaches. In particular, we pointed out that normalization in the Cartan-Karlhede
algorithm leads to multi-valuedness of invariants. Moreover, the obtained Cartan's invariants are local even in
jets-variables (derivatives of the metric components). This leads to restriction of domains of definitions,
which in general may not be even invariant with respect to the equivalence group, see \cite{KL2}.

With the differential invariant approach the signature manifold can be reduced in dimension,
as we saw in Section \ref{S2.4}. For the general class of Kundt waves where $h_v=0$, the $v$-variable can be removed
from consideration and furthermore it is not difficult to remove the $y$-variable too. This dimension reduction
leads to a much simpler setup and the classification algorithm. We left additional independent variables to match
the traditional approach via curvature invariants.

The two considered approaches are not in direct correspondence and each method has its own specifications.
For instance, the invariant-count tree in the Cartan-Karlhede algorithm ideologically has a counter-part
in the Poincar\'e function for the Lie-Tresse approach. However orders of the invariants in the two methods
are not related, obstructing to align the filtrations on the algebras of invariants.

For simplicity in this paper we restricted to generic metrics in the class of Kundt waves. This
manifests in a choice of four functionally independent differential invariants, which is not always possible.
For instance, metrics admitting a Killing vector never admit four independent invariants.
With the Cartan-Karlhede approach this corresponds to invariant branches like (0,1,3,3) ending not with 4,
and for the vacuum case all such possibilities were classified in \cite{MMC}.

With the differential invariants approach we treated metrics specified by explicit inequalities:
$h_y\neq0$, $I_0I_1\neq0$, $\dots$, such that the basic invariants and derivations are defined. It is possible
to restrict to the singular strata, and find the algebra of differential invariants with respect to the
restricted pseudogroup. Thus differential invariants also allow to distinguish more special metrics
in the class of Kundt waves.

To summarize, the classical Lie-Tresse method of differential invariants is a powerful alternative to
the Cartan equivalence method traditionally used in relativity applications.

\bigskip

\textsc{Acknowledgement.}
{\footnotesize
BK is grateful to the BFS/TFS project Pure Mathematics in Norway for funding a seminar on the topic of the current research.
The work of DM was supported by the Research Council of Norway, Toppforsk grant no.\,250367:
Pseudo-Rieman\-nian Geometry and Polynomial Curvature Invariants: Classification, Characterisation and Applications.
ES acknowledges partial support of the same grant and hospitality of the University of Stavanger.}


\end{document}